
\documentclass[Royal]{sagej}

\usepackage{moreverb,url}
\usepackage[colorlinks,bookmarksopen,bookmarksnumbered,citecolor=blue,urlcolor=blue]{hyperref}
\usepackage{tikz-opm}
\usepackage{amsmath, amsthm, amscd, amsfonts,graphicx}
\setcounter{secnumdepth}{3}
\newtheorem{theorem}{Theorem}[]

\newtheorem{proposition}[theorem]{Proposition}

\theoremstyle{definition}
\newtheorem{definition}[theorem]{Definition}

\newcommand\BibTeX{{\rmfamily B\kern-.05em \textsc{i\kern-.025em b}\kern-.08em
T\kern-.1667em\lower.7ex\hbox{E}\kern-.125emX}}

\begin{document}


\title{The dynamic framework of decision-making}

\author{Gholamreza Askari\affilnum{1}, Madjid Eshaghi Gordji\affilnum{1} and Ali Zarei\affilnum{2}}

\affiliation{\affilnum{1}Department of Mathematics, Semnan University P.O. Box 35195-363, Semnan, Iran\\
\affilnum{2}Department of Civil Engineering, Semnan University P.O. Box 35195-363, Semnan, Iran}


\email{g.askari@semnan.ac.ir; meshaghi@semnan.ac.ir; alizarei@semnan.ac.ir}

\begin{abstract}
This work explores dynamics existing in interactions between players. The dynamic system of games is a new attitude to modeling in which an event is modeled using several games. The model allows us to analyze the interplay capabilities and the feasibility objectives of each player after a conflict with other players objectives and capabilities. As an application we model relations between Soviet Union and America after World War II to October 1962, by using the dynamic system of games. The dynamic system of games as an important insight clearly has significant implications for modeling strategic interactions in which player pursue goals for increasing their personal interests. In addition, we introduce a new game in which there is a dilemma which this dilemma occurs in most societies. We investigate depends on the claim that each player in this dilemma is hyper-rational. In concept of hyper-rational, the player thinks about profit or loss of other actors in addition to his personal profit or loss and then will choose an action which is desirable to him. In this dilemma, a weak trust has been created between players, but it is fragile.
\end{abstract}

\keywords{Game theory; 1962 Cuban missile crisis; Strategic interactions; Rationality}

\maketitle

\section{Introduction}
Game theory provide the theoretical underpinnings of analytical techniques and their application in social research.  Indeed more fields of science have benefited from game theoretic models. Dynamic games provide a framework for modeling the behavior of players in situations where there are dynamic strategic interactions. Long provides a survey of models of dynamic games in industrial organization \cite{Van}.  At the \textit{Theory of Games and Economic Behavior}, write: We repeat most emphatically that our theory is thoroughly static. A dynamic theory would unquestionably be more complete and therefore preferable \cite{Von}. Eshaghi and Askari, recently introduced a new method of modeling in game theory, named the dynamic system of strategic games \cite{Eshaghi}. In this modeling, new properties of games such as game-maker game, strategy-maker game and the pair of rational actions are presented and with the help of these properties, the dynamics of players behavior are studied. According to this feature, strategic games were divided into two classes, strategy maker games, and games that aren't strategy maker. Also, Strategy maker games itself are of two groups and games that aren't strategy maker itself are two groups. Here we present required concepts and terms, some advantages of this modeling and  prove two propositions. Exactly similar to everyday life or the process of negotiations, in the dynamic system of games after the choice of strategy or rational action pairs by the players, by entering into the new games the productive game is out of reach. Therefore, the producer games are out of reach of the players and only as a history of the system that can influence the selection of strategies and games in the future.

As an application of the dynamic system of strategic games, we seek modeling of conflict between Soviet Union and America after World War II to October 28, 1962. The Cold War was a state of geopolitical tension after World War II between powers in the Eastern Bloc and powers in the Western Bloc. Prior to 1957, Cuba had for many years been under the complete economic and political control of the United States. In late 1956 a  revolution  to  overthrow  the  Batista  regime was initiated  by  Fidel Castro, and in  1959 he managed to take control of Cuba. The situation was made more unstable by the unsuccessful Bay of Pigs and a declaration by the Soviet Union of its willingness to aid Cuba in defending itself against the US \cite{Fraser}. Countries Soviet Union and America also belong to different models of international relation and political systems. For modeling of Cuban missile crisis, we divide the period into 5 parts and in each period we examine static games with perfect information which had occurred. In return for a public US pledge not to invade Cuba and a private assurance that the US controlled missiles in Turkey would eventually be dismantled, the Soviets agreed to withdraw the missiles \cite{Zagare}.

Theory of Game have introduced the variety of games to express the circumstances of the event \cite{Brams}. The game theorists present the variety of models for interpreting the interaction conditions that players face, which they suggest solutions to resolve these conflicts( see \cite{Yeung, Ungureanu}). In the following, we introduced Rostam's Dilemma. Rostam's Dilemma is a symmetric $2\times 2$ game that examines difficult conditions of decision making between players. Difference between this dilemma and some of other dilemmas is that in the Rostam's Dilemma, cooperation is weak dominant over non-cooperation. As an application of this Dilemma, social dilemma in the way of a wife and husband is modeled.


\section{Dynamic system of games}\label{S.2}

Maybe an event can't be modeled completely through one game but there is more chance with several games \cite{Eshaghi}. The dynamic system of strategic games is a dynamic model of $2\times 2$ games. The advantages of this modeling include:
\begin{itemize}
\item  Considering the impact of games on each other
\item  Using several games to model an event
\item  Dynamic interaction between players
\item  Performance of players' rationality within the system
\item  Selecting an agreed mechanism between players to achieve cooperation
\item  Finding solutions to conflicts
\end{itemize}

In dynamics of the decision-making process in traditional game theory by modeling the game in its extensive form, in each node one player was able to make a decision, but in dynamic system of games in each node, two or more players are able to make decisions. The inability to model the dynamical element of game play in static game theory, and the extent to which dynamic system of games naturally incorporates dynamical considerations, reveals an important virtue of dynamic system of games.

Here, we consider strategic $2\times 2$ games with perfect information. If a game produces other games, it is called \textit{game-maker game}. In general, if the games $g_1, g_2, ...,g_n$ generate games $g^{'}_1, g^{'}_2, ...,g^{'}_m$, then $g_i$- and $g^{'}_i$-s are called \textit{producer} and \textit{produced}, respectively. We call the form of displaying game-maker games as \textit{dynamic system of strategic games}.

If a game create one or more strategies is will be called \textit{strategy-maker game}. Each \textit{strategy} has at least two pairs of actions. Each pair of actions includes players' payoffs. The produced strategies can be dominant strategy, dominated strategy, weakly dominant strategy and weakly dominated strategy. Therefore, each dominant action of a player can be called \textit{dominant strategy} of a player. If a game doesn't generate any strategy, the game isn't strategy maker.

In a strategic game with ordinal preferences, player $i$s action  $a^{''}_{i}$ strictly dominates her action  $a^{'}_{i}$ if
\begin{equation*}
u_i(a^{''}_{i}, a_{-i})> u_i(a^{'}_{i}, a_{-i}) \;\  for \;\ every \;\ a_{-i}\in A_{-i},
\end{equation*}
where  $u_i$ is a payoff function that represents player i’s preferences \cite{Webb}. If for player $i$ the action $a^{''}_{i}$ is preferred to action $a^{'}_{i}$ per every choice of action of other players, it is called \textit{strictly dominant strategy} and is shown by $S^{j}_{i}$ where ${ _{k}S^{j}_{i}}$ shows $j$-th strategy of $i$-th player from $k$-th game.

In a strategic game with ordinal preferences, player $i$’s action $a^{''}_{i}$ weakly dominates her action $a^{'}_{i}$ if
\begin{equation*}
u_i(a^{''}_{i}, a_{-i})\geq u_i(a^{'}_{i}, a_{-i}) \;\  for\; every \;\  a_{-i}\in A_{-i}
\end{equation*}
and
\begin{equation*}
u_i(a^{''}_{i}, a_{-i})> u_i(a^{'}_{i}, a_{-i}) \;\ for \;\ some \;\ a_{-i}\in A_{-i},
\end{equation*}
where  $u_i$ is a payoff function that represents player i’s preferences  \cite{Webb}. If for player $i$ the action $a^{''}_{i}$ is preferred over action $a^{'}_{i}$ for each action choice of other players, it is called \textit{weakly dominant strategy} and will be represented by $S^{j}_{i}$.

If a game with $n$ players is strategy maker for $k$ players ($1\leq k \leq n$) it is called \textit{strategy maker game of order} $(n, k)$. If a game with $n$ players isn't strategy maker it is called \textit{strategy maker game of order} $(n, 0)$. In other word, we can consider a strategy maker game of order $(n, 0)$ as a game which is not strategy maker. In the following, we introduce pair of rational actions. Players’ preferences on pairs of rational actions are based on payoffs that they obtain.
\begin{definition}(\textit{Pair of rational actions}) A pair of actions is called \textit{rational} if at least hold true in one of the following conditions:
\begin{itemize}
\item  would be Nash equilibrium;
\item  pair of actions, would be Pareto dominant for both players over other pairs of actions;
\item  for each game that is strategy maker of order $(2, 1)$, pairs of rational actions for one player is responses to dominant strategy or weakly dominant strategy produced for other player.
\end{itemize}
\end{definition}

In a strategy maker game of order $(2, 2)$ where both players have strictly dominant strategy and the game hasn't Pareto action pairs over Nash equilibrium, the Nash equilibrium of game is the only rational actions pair.  In Fig.\ref{fig:A},  Stag Hunt game $g_1$ is strategy maker of order $(2,0)$. This game has Nash equilibriums $({_{1}C},{_{1}C})$ and $({_{1}D},{_{1}D})$. Also in this game, $({_{1}C},{_{1}C})$  is the dominant Pareto compared to pair of action $({_{1}D},{_{1}D})$. Therefore, set of rational actions pair of players includes $({_{1}C},{_{1}C})$  and $({_{1}D},{_{1}D})$.

\begin{figure}
\centering
\begin{tikzpicture}
\node [ opmobject] (4) {\begin{tabular}{c|c|c|}
\multicolumn{1}{c}{$g_1$} & \multicolumn{1}{c}{} & \multicolumn{1}{c}{} \\[-2.5mm]
  \multicolumn{1}{c}{} & \multicolumn{1}{c}{${_{1}C}$ } & \multicolumn{1}{c}{${_{1}D}$} \\ \cline{2-3}
  ${_{1}C}$ & 4,4 & 1,3 \\ \cline{2-3}
${_{1}D}$ & 3,1 & 2,2 \\ \cline{2-3}
\end{tabular}};
\end{tikzpicture}
\caption{The Stag Hunt game $g_1$ is a strategy maker game of order $(2, 0)$. Each player has two action ${_{k}C}$ or action ${_{k}D}$ where $k$-th shows number of game.}
\label{fig:A}
\end{figure}

 A tools that can display dynamic system of strategic games is games graph. Within each node of a graph, there is a strategic game in which players can make decisions. Each node  of this graph can be generator of the next game through the two following methods and be connected to it:
\begin{enumerate}
\item  strategies
\item  pair of rational actions
\end{enumerate}

Moving to the next node by the made strategy is always preference of one of players but continuing game with rational action pair may be preference of one player or both of them. Graph nodes are considered as \textit{initial node}, \textit{move node} and \textit{final node}. Nodes that players desire to continue the game are called \textit{move node}. \textit{Final node} is a node that players have no desire to continue the games (system). Every dynamic system of games includes players set, strategies set, set of rational actions pairs, system history and node and systemic preferences of players.

A graph $\mathcal{G}$ is given by $(G, E)$, where $G=\{g_1, g_2, ..., g_n\}$ is a finite set of
nodes or vertices that each node of this graph is a strategic game and $E=\{g_ig_j, g_sg_r,..., g_kg_l\}$ is a set of pairs of vertices (or 2-subsets of $E$) called branches or edges which indicates which nodes are produced strategies or pair of rational actions. Set of all strategies produced by $k$-th game is represented by ${_{k}\mathcal{S}}={_{k}\mathcal{S}_1}\cup {_{k}\mathcal{S}_2} \cup \emptyset$. Set of all pairs of actions players' in $k$-th game is shown with ${_{k}\mathcal{A}}={_{k}A_1}\times {_{k}A_2}$. Set of all pairs of rational actions  for player $i$ is shown with ${_{k}\mathcal{A}^{'}_{i}}$ that is a subset of  ${_{k}\mathcal{A}}$, for all $k\in\{1,2,...,n\}$.

Let $\mathcal{A}={_{1}\mathcal{A}} \cup {_{2}\mathcal{A}} \cup ... \cup {_{n}\mathcal{A}}\cup \emptyset$ and $\mathcal{S}={_{1}\mathcal{S}}\cup {_{2}\mathcal{S}} \cup...\cup {_{n}\mathcal{S}}$ be two set. The set valued functions, rational actions pair $\phi^{'}_{i}: G\to \mathcal{A}$ and strategy maker $\phi_i: G\to \mathcal{S}$ for players $i$'s are defined as follows:
\begin{equation*}
\phi^{'}_{i}(g_k)={_{k}\mathcal{A}^{'}_{i}}=
\left\{\begin{array}{rl}
\{({_{k}a_i}, {_{k}a_{-i}})_i | ({_{k}a_i}, {_{k}a_{-i}})_i\in {_{k}\mathcal{A}} \ \ \ \  if~g_k~has~pair~of~ \  \ \ \ \ \  \ \ \ \  \\  rational~actions \ \ \ \ \ \ \ \ \ \ \ \  \\
\emptyset \ \ \ \ \ \ \ \ \ \ \ \ \  ~if~g_k~has'nt~pair~of \ \ \ \ \ \ \ \ \ \ \  \\
 \ \ \ \ \ \ rational~actions, \ \ \ \ \ \ \ \ \ \ \ \
\end{array}\right.
\end{equation*}

\begin{equation*}
\phi_{i}(g_k)={_{k}\mathcal{S}_i}=
\left\{\begin{array}{rl}
\{{_{k}S^{j}_{i}}| {_{k}S^{j}_{i}}\in {_{k}\mathcal{S}}\} \ \ \ \ \ \ if~g_k~is~strategy~maker~for~player~i \\
\emptyset \ \ \ \ \ \ \ \ \ \ \ \ \ \ \ \ \ \ \ ~if~g_k~is'nt~strategy~maker~for~player~i,
\end{array}\right.
\end{equation*}
for all $i\in N$ and $j,k\in I=\{1,2,...,n\}$, where $g_k$ shows $k$-th game, $({_{k}a_i}, {_{k}a_{-i}})_i$ shows rational actions pair of $i$-th player from $k$-th game and  ${ _{k}S^{j}_{i}}$ shows $j$-th strategy of $i$-th player from $k$-th game. Every move of system as a member of set $M$ is as follows:
\begin{align*}
&M:=\Big\{m^{j}_{k}| \;\ m^{j}_{k}={_{k}S^{j}_{i}}\;\  or \;\  m^{j}_{k}=({_{k}a_i}, {_{k}a_{-i}})_i \ \ or \\
 & \ \ \ \ \ \ \ \ \ \ \ \ \ \ \   m^{j}_{k}=({_{k}a_i}, {_{k}a_{-i}})_{i,j} \ \ \forall \;\ {_{k}S^{j}_{i}}\in {_{k}\mathcal{S}_i}, \\
 & \ \ \ \ \ \ \ \ \ \ \ \ \ \ \  ({_{k}a_i}, {_{k}a_{-i}})_{i} \in {_{k}\mathcal{A}_i}, \ \  ({_{k}a_i}, {_{k}a_{-i}})_{j} \in {_{k}\mathcal{A}_j} \Big\},
\end{align*}
where $m^{j}_{k}$ shows $j$-th move  of $k$-th game and $({_{k}a_i}, {_{k}a_{-i}})_{i,j}$ shows the pair of rational action selected by players $i$ and $j$ of $k$-th game. Players’ move function $\varphi_{i}: M \to G^2$ and $\varphi_{i,j}: M \to G^2 \cup \emptyset$ with $ \varphi_{i,j}({_{k}S^{j}_{i}})=\emptyset$ is defined as following:
\begin{equation*}
\varphi_{i}(m^{j}_{k})=
\left\{\begin{array}{rl}
 (g_k, g_p)=g_kg_p & ~~if~~ m^{j}_{k}={_{k}S^{j}_{i}} \\
 (g_k, g_q)=g_kg_q & ~~if~~ m^{j}_{k}=({_{k}a_i}, {_{k}a_{-i}})_i,
\end{array}\right.
\end{equation*}
 \begin{equation*}
\varphi_{i,j}(m^{j}_{k})=
\left\{\begin{array}{rl}
\emptyset \ \ \ \ & ~~if ~~~~ m^{j}_{k}={_{k}S^{j}_{i}}\\
(g_k, g_s)=g_kg_s & ~~if ~~~~ m^{j}_{k}=({_{k}a_i}, {_{k}a_{-i}})_{i,j}.
\end{array}\right.
\end{equation*}

The above function shows by what move two game nodes have been connected to each other by one or both players. Consequently, it can be said that in move $m^{j}_{k}={_{k}S^{j}_{i}}$,  nodes $g_k$ and $g_p$ have been connected through the strategy selected by player $i$ to each other. In move $m^{j}_{k}=({_{k}a_i}, {_{k}a_{-i}})_i$ the nodes $g_k$ and $g_q$ have been connected by pair of rational action selected by player $i$ to each other. In move $m^{j}_{k}=({_{k}a_i}, {_{k}a_{-i}})_{i,j}$ the nodes $g_k$ and $g_s$ have been connected through a pair of rational actions selected by players $i$ and $j$ to each other.

Consider that $H$ is a set including all series (finite and infinite) that hold true in the following conditions:
\begin{enumerate}
\item $\emptyset$ is member of $H$.
\item Sequence $\big\{m^{j}_{i},\{g_k, m^{j}_{k}\}\big\}_{i,j,k\in I }$ for all $i, j, k\in \{1,2,...,n\}$, is a member of $H$. Each member of $H$ is called a history and is represented by $h$.
\item History $h=\big\{m^{j}_{i},\{g_k, m^{j}_{k}\}\big\}_{i,j,k\in I }$ is called final history if it is infinite or there isn't $g_{k+1}$ that is a member of h.
\end{enumerate}

The set $H$ is called \textit{system history}. Preferences of each node of a games system that are exactly the same preferences on the pairs of a strategic game actions are called \textit{node preferences} or \textit{tactical preferences}. Preferences on strategies set or set of rational actions pair of a game is called \textit{systemic preferences} or \textit{strategic preferences}.

 \begin{definition} (\textit{Dynamic system of strategic games}) A dynamic system of strategic games with perfect information including:
\begin{itemize}
 \item  a set of players
\item for each player, a set of strategies
\item for each player, a set of rational actions pair
\item system history
\item node preferences (tactical preferences) on set of all actions pairs
\item systemic preferences (strategic preferences) on strategies or pairs of rational actions.
\end{itemize}
\end{definition}

This modeling allows players to design games to earn most benefit during the negotiation and based on their bargaining power. In the produced games, players and strategies may change depending on conditions and also players and strategies may increase or decrease. This system will be provided practical freedom in which players can analyze their interaction conditions with others in the future and decide what strategy they choose and at what level has interaction. Also, a player may choose an action that has less outcome for him, in game conditions, but, obtain more benefit in future according to strategic preferences and conversely may he prefer present over future or a player may prefer tactical preferences on strategic preferences. From this modeling, the following propositions can be deduced.

\begin{proposition}\label{P.1}
 Each strategy-maker game $G$, has at least one pair of rational actions.
\end{proposition}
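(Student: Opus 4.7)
The plan is to split the argument according to the order of the strategy-maker game $G$. Since $G$ is strategy-maker, its order is either $(2,1)$ or $(2,2)$, and in each case I will identify an action pair that satisfies one of the three conditions in the definition of a pair of rational actions.

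First I would handle the case where $G$ has order $(2,2)$. Here both players possess a strictly or weakly dominant strategy, say $a_1^{*}$ and $a_2^{*}$. I would argue that $(a_1^{*}, a_2^{*})$ is a Nash equilibrium: by the definition of (weak) dominance recalled in the excerpt, $u_i(a_i^{*}, a_{-i}) \geq u_i(a_i', a_{-i})$ for every $a_{-i}$ and every alternative action $a_i'$, so in particular $u_i(a_i^{*}, a_{-i}^{*}) \geq u_i(a_i', a_{-i}^{*})$, meaning neither player can profitably deviate. By the first clause of the definition of rational action pair (Nash equilibrium), this gives the required pair.

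Next I would handle the case where $G$ has order $(2,1)$. Here exactly one player, say player $i$, possesses a dominant or weakly dominant strategy $S_i^{j}$. The third clause of the definition of pair of rational actions is tailored precisely for this situation: it declares that, in a strategy-maker game of order $(2,1)$, the other player's responses to $S_i^{j}$ form rational action pairs. Since $G$ is a $2\times 2$ game with ordinal preferences, player $-i$ has a (non-empty) best response to $S_i^{j}$, so at least one such pair exists.

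The main obstacle — and the reason the statement is non-trivial rather than a triviality — is ensuring that the three cases of the definition truly cover every strategy-maker game. In particular, one has to be careful that the $(2,2)$ case where both dominant strategies are only \emph{weak} still produces a Nash equilibrium; the inequality in the weak-dominance definition is enough to rule out profitable unilateral deviations, even though the profile need not be strict. Once this is checked, the proposition follows by the case split above, since a strategy-maker game by definition has order $(2,1)$ or $(2,2)$.
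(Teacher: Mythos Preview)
Your argument is correct and follows the same overall case split as the paper: a strategy-maker game has order $(2,1)$ or $(2,2)$, and in the $(2,1)$ case you both invoke the third clause of the definition of rational action pair. The only real difference is in the $(2,2)$ case: the paper further subdivides into the situation where both players have strictly dominant strategies (and no Pareto-superior profile over the Nash equilibrium) versus the situation where the dominance is weak, and then asserts the conclusion separately in each subcase. You instead give a single unified argument, observing directly that the profile $(a_1^{*},a_2^{*})$ of (weakly or strictly) dominant actions is automatically a Nash equilibrium, hence a rational action pair by the first clause. Your route is cleaner, avoids the somewhat awkward subcases in the paper's version, and makes the existence claim explicit rather than descriptive.
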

\begin{proof}
 Suppose $G$ is a strategy-maker game. Then game $G$, either is strategy-maker game of order $(2, 1)$ or is strategy-maker game of order $(2, 2)$. If game $G$, is strategy-maker game of order $(2, 1)$, then pairs of rational actions for one player is responses to strictly dominant strategy or weakly dominant strategy produced for other player. If game $G$, is strategy-maker game of order $(2, 2)$, both players either have strictly dominant strategy and game hasn't Pareto action pairs over Nash equilibrium, then the Nash equilibrium of game is the only rational actions pair or have weakly dominant strategy, then in this case game has more than one rational actions pair.
\end{proof}

Contrary to the above proposition, do not be right. That is, there are games that has pair of rational actions, but are not necessarily a strategy-maker.

\begin{proposition}\label{P.2}
 Each game $G$ that does not have a pair of rational actions is not strategy-maker.
\end{proposition}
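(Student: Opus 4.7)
The plan is to recognize that Proposition \ref{P.2} is, as a matter of propositional logic, the contrapositive of Proposition \ref{P.1}. Since Proposition \ref{P.1} asserts that every strategy-maker game admits at least one pair of rational actions, its contrapositive---namely, that a game lacking any pair of rational actions cannot be strategy-maker---follows immediately, with no additional combinatorial content required. So my first and preferred move is simply to cite Proposition \ref{P.1} and conclude by contraposition.

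If a direct proof is desired rather than an appeal to contraposition, I would argue by contradiction. I would assume that $G$ has no pair of rational actions and yet is strategy-maker, then split into the two admissible sub-cases: $G$ is strategy-maker of order $(2,1)$, or of order $(2,2)$. In the $(2,1)$ case, one player has a strictly or weakly dominant strategy, and the third clause of the definition of pair of rational actions then supplies the other player's response to that dominant strategy as a pair of rational actions. In the $(2,2)$ case, both players have dominant strategies, and the resulting dominance-forced Nash equilibrium is a pair of rational actions by the first clause of the definition (using, if necessary, the Pareto refinement mentioned in the text). Either sub-case contradicts the assumption that no such pair exists.

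The main obstacle, such as it is, is purely definitional: I need to confirm that the term \emph{strategy-maker} in the statement excludes order $(2,0)$ by convention (as established in Section \ref{S.2}), so that the dichotomy between orders $(2,1)$ and $(2,2)$ is genuinely exhaustive. Once that is pinned down, no new mathematical work is required beyond what already appears in the proof of Proposition \ref{P.1}, and the remark following that proposition---that the converse fails---clarifies why Proposition \ref{P.2} is stated as a separate, strictly weaker implication rather than an equivalence.
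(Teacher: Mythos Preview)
Your proposal is correct and matches the paper's approach exactly: the paper proves Proposition~\ref{P.2} by contradiction, assuming $G$ is strategy-maker and invoking Proposition~\ref{P.1} to obtain a pair of rational actions, contradicting the hypothesis. Your additional case analysis for orders $(2,1)$ and $(2,2)$ is not needed here, since it simply re-derives Proposition~\ref{P.1}, but it is consistent with the paper's earlier argument.
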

\begin{proof}
The proof is by contradiction. First suppose $G$ is a strategy-maker game. Then by proposition \ref{P.1}, game $G$ has at least one pair of rational actions, which is inconsistent with main assumption of proposition \ref{P.2}.
\end{proof}
Contrary to the above proposition, do not be right. That is, there are games that are not strategy-maker, but has a pair of rational actions. In the following, we propose a different model from coincident attendance Soviet Union and America in Cuban Missile Crisis.


 \section{Cuban Missile Crisis}
 Soviet Union and America relations after World War II sometimes was strained. But the most important and dangerous crisis in Soviet Union and America relations, that put the world in the edge of a destructive war, was on establishment of Soviet Missile bases in Cuba. Soviet Union wanted to establish a massive arsenal in this region of Cuba in which deploy great number of heavy weapons to become the largest military station of this country in Caribbean. But Khrushchev’s trick was ultimately revealed. In October 1962 an American spy plane discovered the site of missile launch in the west of Havana that confirmed establishment of medium-range missiles in Cuba. To confront with danger that threat America's security, America's then president also constituted a war council and for one week, a serious and secret discussion was helpful in White House. In this Council, different suggestions including an all-out military attack to Cuba and military occupation of the country or bombarding missile launch platform in Cuba were announced. In continuance, Kenedi, in 22 October 1962 warned American people that Soviet government has established missile bases in Cuba and deploy of these missiles in a distance of one hundred mile from America beaches is a threat for the country's security. Kennedy at the same time with this warning announced that Soviet government must remove its missile bases from Cuba. Following this, he issued the order of Cuba naval blocked to prevent transfer of Soviet's new missile equipments to Cuba \cite{Kennedy}. This issue became a world crisis for one week.

 Now, using dynamic system of games, we model relations between Soviet Union and America after World War II to October 28, 1962. To this end, we divide the period into 5 parts and in each period we examine static games with perfect information which had occur. The first period is after World War II to October 13, 1962 that is shown in the form of game $g_1$. Second period from October 14 to 22 includes games $g_2$ and $g_3$. Third period is from October 23 to 25 that includes games $g_4$, $g_5$, $g_6$ and $g_7$. Fourth period is from October 26 to 27 that includes games $g_8$ and $g_9$. And finally, the day October 28 that includes game $g_{10}$. For more information  about Cuban Missile Crisis, refer to references (see \cite{Blaschke, Crall, May, Polletta}).

 Relations of Soviet Union and America after World War II and during 1950s were entered in Deadlock game and they hadn't much desire to cooperation with each other. As a result, both countries had selected non-cooperation. Deadlock game $g_1$ has been shown in Fig. \ref{fig:Cuba missile crisis}. Soviets is assumed as row player (player $1$) and America is assumed column player (player $2$). The players' set of actions include cooperation ${_{1}C}$ and defect  ${_{1}D}$. Players' preferences in this node are the same order preferences of strategic game $g_1$. The game Nash equilibrium is $({_{1}D},{_{1}D})$. In $g_1$, dominant strategy $_{1}S^{1}_{1}$ is defect  and dominated strategy $_{1}S^{2}_{1}$ is cooperation for player $1$. Also this game is producer of dominant strategy of defect $_{1}S^{1}_{2}$ and dominated strategy of cooperation $_{1}S^{2}_{2}$ for player $2$. In other words, game $g_1$ is strategy maker of order $(2, 2)$. The only pair of rational actions for players is $({_{1}D},{_{1}D})_{1,2}$. Based on players being rational, they select dominant strategy $_{1}S^{1}_{1}$ and dominant strategy $_{1}S^{1}_{2}$ for the game continuation. According to its dominant strategy, Soviet, was convinced in July 1962 to implement its Atomic Missile Establishment plan in Cuba that in order to confront with America's growing expansion. This strategy caused Soviet to hostage a part of Havana in Cuba (America's backyard). But Khrushchev’s trick was ultimately revealed. In October 1962 an American spy plane discovered the missile launch sit in the west of Havana that confirmed establishment of medium-range missiles in Cuba.

Strategy $_{1}S^{1}_{1}$ ends to Hostage game $g_2$. Soviet wanted to surprise Americans. In this game, Soviet has two actions: either cooperates ${_{2}C}$ through removing missile or defect ${_{2}D} $ and maintain missiles. America could reveal the issue ${_{2}C}$ or keep it secret ${_{2}D} $. The game Nash equilibrium is $({_{2}D},{_{2}C})$. This game is producer of dominant strategy of defect $_{2}S^{1}_{1}$ and dominated strategy of cooperation $_{2}S^{2}_{1}$ for player $1$ and also dominant strategy $_{2}S^{1}_{2}$ for player $2$ is reveal the issue and dominated strategy $_{2}S^{2}_{2}$ is keep it secret. The only pair of rational actions for both players is $({_{2}D},{_{2}C})_{1,2}$.

Strategy  $_{1}S^{1}_{2}$ ends to Self-Serving game $g_3$. In $g_3$, America has two actions: either it does nothing and makes cooperation ${_{3}C}$ or makes defect ${_{3}D}$ through blockade of Cuba. Also, Soviet also has two actions: either it cooperates ${_{3}C}$ to remove missile through diplomatic canal or makes defect ${_{3}D}$ and maintain missiles. The game Nash equilibrium is $({_{3}C},{_{3}D})$. $g_3$ is strategy maker of order $(2, 1)$. This game is producer of dominant strategy of defect $_{3}S^{1}_{2}$ and dominated strategy of cooperation $_{3}S^{2}_{2}$ for player 2. Pairs of rational actions for players are $({_{3}C},{_{3}D})_{1,2}$ and $({_{3}D},{_{3}D})_{1}$.

Strategy $_{2}S^{1}_{1}$ ends to Chicken $g_4$. In $g_4$, Soviet has two actions: either doesn't attack to America ${_{4}C}$ or attack to America ${_{4}D}$. Also, America has two actions: either it doesn't attack to Soviet ${_{4}C}$ or attacks to Soviet for retaliation ${_{4}D}$. The game Nash equilibria are $({_{4}C},{_{4}D})$ and $({_{4}D},{_{4}C})$. $g_4$ is strategy maker of order $(2, 0)$. In other words, this game isn't strategy maker for players. The players' pairs of rational actions are $({_{4}C},{_{4}C})_{1,2}$, $({_{4}D},{_{4}C})_{1}$ and $({_{4}C},{_{4}D})_{2}$.

Strategy $_{2}S^{1}_{2}$ ends to Stag Hunt $g_5$. In $g_5$, America has two actions: either through diplomacy seeks solution ${_{5}C}$ or through diplomacy put pressure on Soviet ${_{5}D}$. Also, Soviet has two actions: either it seeks solution ${_{5}C}$ through diplomacy or thinks about crisis intensification ${_{5}D}$. The game Nash equilibria are $({_{5}C},{_{5}C})$ and $({_{5}D},{_{5}D})$. $g_5$ is  strategy maker of order $(2,0)$. The players' pair of rational actions are $({_{5}C},{_{5}C})_{1,2}$ and $({_{5}D},{_{5}D})_{1,2}$.

Pair of rational actions $({_{3}C},{_{3}D})_{1}$ ends to another Stag Hunt $g_6$. In $g_6$, Soviet has two actions: either it doesn't break blockade ${_{6}C}$ or break the blockade ${_{6}D}$. Also, America has two actions: either it doesn't conflict ${_{6}C}$ with Soviet ships or conflict with Soviet ships ${_{6}D}$. The game Nash equilibria are $({_{6}C},{_{6}C})$ and $({_{6}D},{_{6}D})$. $g_6$ is  strategy maker of order $(2,0)$. The players' pair of rational actions are $({_{6}C},{_{6}C})_{1,2}$ and $({_{6}D},{_{6}D})_{1,2}$.

Strategy $_{3}S^{1}_{2}$ ends to Chicken $g_7$. In $g_7$, America has two actions: either doesn't attack to Soviet ${_{7}C}$ or attack to Soviet ${_{7}D}$ . Also, Soviet has two actions: either it doesn't attack to America ${_{7}C}$ or attacks to America in a retaliatory invasion ${_{7}D}$. The game Nash equilibria are $({_{7}C},{_{7}D})$ and $({_{7}D},{_{7}C})$. $g_7$ is strategy maker of order $(2, 0)$. The players' pairs of rational actions are $({_{7}C},{_{7}C})_{1,2}$, $({_{7}D},{_{7}C})_{1}$ and $({_{7}C},{_{7}D})_{2}$.

Based rationality of players and strategic preferences, players selecting pairs of rational actions $({_{4}C},{_{4}C})_{1,2}$, $({_{5}C},{_{5}C})^{'}_{1,2}$ and $({_{6}C},{_{6}C})_{1,2}$ end to Coordination game $g_8$. In $g_8$, Soviet has two actions: either it issues the command of ships not to move toward Cuba ${_{8}C}$ or thinks about resolving the crisis through negotiation ${_{8}D}$. Also, America has two actions: either it issues the command that America's ships not confront with Soviet ${_{8}C}$ or thinks about resolving the crisis through negotiation ${_{8}D}$. The game Nash equilibria are $({_{8}C},{_{8}C})$ and $({_{8}D},{_{8}D})$. $g_8$ is strategy maker of order $(2,0)$. The players' pairs of rational actions are $({_{8}C},{_{8}C})_{1,2}$ and $({_{8}D},{_{8}D})_{1,2}$.

Based on players' rationality and strategic preferences, players by selection of rational actions pairs $({_{5}C},{_{5}C})_{1,2}$,  $({_{6}C},{_{6}C})^{'}_{1,2}$ and  $({_{7}C},{_{7}C})_{1,2}$ ends to Rostam's Dilemma $g_9$. In $g_9$, Soviet has two actions: either it removes missile from Cuba ${_{9}C}$ or maintains missiles in Cuba ${_{9}D}$. Also, America has two actions: either it removes blockade of Cuba ${_{9}C}$ or maintains blockade of Cuba ${_{9}D}$. The game Nash equilibria are $({_{9}C},{_{9}C})$ and $({_{9}D},{_{9}D})$. $g_9$ is strategy maker of order $(2, 2)$. In $g_9$, the weak dominant strategy $_{9}S^{1}_{1}$ for Soviet is removing missiles and its weakly dominated strategy $_{9}S^{2}_{1}$ is maintaining missiles. Also, for America weakly dominant strategy $_{9}S^{1}_{2}$ is removing blockade and weakly dominated strategy  $_{9}S^{2}_{2}$ is maintaining blockade. The players' pairs of rational actions are $({_{9}C},{_{9}C})_{1,2}$,  $({_{9}D},{_{9}D})_{1,2}$, $({_{9}D},{_{9}C})_{1}$ and $({_{9}C},{_{9}D})_{2}$.

Based on players’ rationality and strategic preferences, players by selection of pairs of rational actions $({_{8}C},{_{8}C})_{1,2}$ and $({_{9}C},{_{9}C})_{1,2}$ ends to game Win-Win $g_{10}$. In this game, both players have two actions cooperation ${_{10}C}$ and defection ${_{10}D}$. The game Nash equilibrium is $({_{10}C},{_{10}C})$. In this step, the players have no appetence to continue so this completed the system. Dynamic system of game with strategic games between Soviet and America is represented by graphs in Fig. \ref{fig:Cuba missile crisis}. History of system is as follows:
\begin{align*}
&H=\Big\{\emptyset, \big\{g_1, {_{1}S^{1}_{1}}, {_{1}S^{1}_{2}} \big\},\big\{{_{1}S^{1}_{1}},\{g_2, {_{2}S^{1}_{1}}, {_{2}S^{1}_{2}}\}\big\}, \big\{{_{1}S^{1}_{2}},\{g_3, {_{3}S^{1}_{2}}, ({_{3}C},{_{3}D})_{1}\}\big\},\\
&\;\;\;\;\;\;\;\;\;\
\big\{{_{2}S^{1}_{1}}, \{g_4, ({_{4}C},{_{4}C})_{1, 2}\}\big\}, \big\{{_{2}S^{1}_{2}},\{g_5, ({_{5}C},{_{5}D})_{1,2}, ({_{5}C},{_{5}D})^{'}_{1,2}\}\big\}, \\
&\;\;\;\;\;\;\;\;\;\
\big\{({_{3}C},{_{3}D})_{1},\{g_6, ({_{6}C},{_{6}D})_{1,2}, ({_{6}C},{_{6}D})^{'}_{1,2}\}\big\}, \big\{{_{3}S^{1}_{2}}, \{g_7, ({_{7}C},{_{7}C})_{1, 2}\}\big\},\\
&\;\;\;\;\;\;\;\;\;\
\big\{({_{4}C},{_{4}D})_{1,2}, ({_{5}C},{_{5}D})^{'}_{1,2}, ({_{6}C},{_{6}D})_{1,2}, \{g_8, ({_{8}C},{_{8}C})_{1, 2}\}\big\}, \\
 &\;\;\;\;\;\;\;\;\;\
\big\{({_{5}C},{_{5}D})_{1,2}, ({_{6}C},{_{6}D})^{'}_{1,2}, ({_{7}C},{_{7}D})_{1,2}, \{g_9, ({_{9}C},{_{9}C})_{1, 2}\}\big\}, \\
&\;\;\;\;\;\;\;\;\;\
\big\{({_{8}C},{_{8}C})_{1,2}, ({_{9}C},{_{9}D})_{1,2}, \{g_{10}\}\big\}\Big\}.
\end{align*}
\begin{figure}
\centering
\includegraphics[width=.8\linewidth]{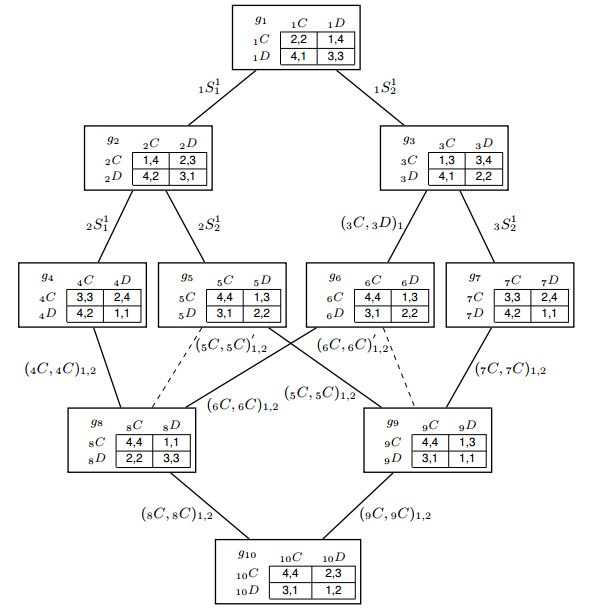}
\caption{Dynamic system of games between Soviet Union and America}
\label{fig:Cuba missile crisis}
\end{figure}

In Fig. \ref{fig:Cuba missile crisis}, if Chicken had dominant strategy even for one player, atomic war was probably occurred. This weakness of not having dominant strategy in this game causes the players to consider their strategic preferences and not to use tactical preferences inside the game. In other words, players prefer strategic preferences over tactical preferences. Therefore, it can be concluded that in Cuban missile crisis, the two countries reached to compromise.


\section{Rostam's Dilemma (Weak Trust Game)}

In section Cuban missile crisis, game $g_9$ in Fig. \ref{fig:Cuba missile crisis}, is one of the other several dilemmas that we confront in attempt to achieve cooperation. In this game, cooperation can produce the best possible overall result, but there is a non-cooperative Nash equilibrium that wants to draw us toward itself. Difference between this dilemma and some of other dilemmas is that in the Rostam's Dilemma, cooperation is weak dominant over non-cooperation. Despite Prisoner's Dilemma that players have no trust in each other, in Rostam's Dilemma, there is a weak trust between players but yet this weak trust doesn't ensure that cooperation completely. There has been a view that if players trust in each other, they will obtain better result but this game shows that despite players’ weak trust in each other, there is no definite guarantee to achieve desirable result and this trust is fragile.

Abolghasem Ferdosi Toosi is an Iranian epic poet and composer of Iran national epic Shahnameh that stated conflict between Rostam and Sohrab in the epic form \cite{Ferdowsi}. Rostam and Sohrab epic is one of the saddest events of Shahnameh. Rostam is one of the Iranian athletes that marry to Samangan king's daughter, Tahmineh. Some days after marriage, he said goodbye to Tahmineh, and happily came to Iran from there went to Zabolestan. After nine months Tahmineh gave birth to a boy and informed Rostam. He was named Sohrab. One day Sohrab went to his mother and said: who is my father? If someone asks me what I say in answer? Mother said: you are the son of robust athlete Rostam and from Sam and Zal race.

In fact there are several intertwined pieces in Rostam and Sohrab story but the main conflict is between Rostam and Sohrab. After some years Soharb with an army of Tooranian and Samanganian depart for war with Iranians. When Rostam reached to Sohrab on the battle plain said: let go from here to another side and fight. Sohrab agreed and demanded person to person war and said: you are old and not able to fight against me. Rostam said: calm down. Many demons were killed in my hand so wait to see me in fight. I don't like to fight with you and kill you. Sohrab suddenly asked: who are you and from what race? I think you are Rostam. Rostam said: no I’m not. Both went to battle field and fought to the end of day. Rostam said it is night, tomorrow we wrestle. Rostam said himself: I have a son from Tahmineh who is a youth as old as Sohrab and maybe he is himself.

By sunrise, they went again to battle field. Sohrab said with himself: the more I watch him, the more I think that he is Rostam himself and I have not to fight with him. Sohrab said to Rostam: how was last night? Come on sit down to speak with each other and don't fight. My heart is drawn to you. I asked to know your name a lot of time but no one told me your name  then don't hide your name. Rostam said: last night, we talked about wrestling you can’t deceive me. Then they wrestled and fought for a while finally Sohrab took Rostam's belt and threw him on the ground and took out dagger but Rostam said: our tradition is that one who throws a person on the ground doesn't kill him first time but in the second time kill him. Sohrab agreed because he was both brave and chivalrous. They went again to battle field and grappled, this time Rostam threw Sohrab on the ground then took out his dagger and killed him.

With help of game theory, the game between Rostam and Sohrab is modeled. Rostam has two actions; either he says his name to Sohrab $S$ or not says his name to Sohrab $NS$. Sohrab also has two actions; either he fights against Rostam $F$ or not to fight against Rostam $NF$. We call this game as Rostam's Dilemma. In Rostam's Dilemma, Sohrab is considered as row player (player 1) and Rostam is considered as column (player 2)(Fig. \ref{fig:RS}G). Because in static game, players choose their action simultaneously.
\begin{figure}
\centering
\begin{tikzpicture}
\node [ opmobject] (4){\begin{tabular}{c|c|c|}
\multicolumn{1}{c}{$G$} & \multicolumn{1}{c}{} & \multicolumn{1}{c}{} \\[-2.5mm]
  \multicolumn{1}{c}{} & \multicolumn{1}{c}{S} & \multicolumn{1}{c}{NS} \\ \cline{2-3}
   NF & 4,4 & 1,3 \\ \cline{2-3}
   F & 3,1 & 1,1 \\ \cline{2-3}
\end{tabular}};
\node [ opmobject, right=of 4, xshift=4 pt, ] () {\begin{tabular}{c|c|c|}
 \multicolumn{1}{c}{$G^{'}$} & \multicolumn{1}{c}{} & \multicolumn{1}{c}{} \\[-2.5mm]
  \multicolumn{1}{c}{} & \multicolumn{1}{c}{C} & \multicolumn{1}{c}{B} \\ \cline{2-3}
   C & 4,4 & 1,3 \\ \cline{2-3}
   B & 3,1 & 1,1 \\ \cline{2-3}
\end{tabular}};
\end{tikzpicture}
\caption{$G$ is a Rostam's Dilemma. $G^{'}$ is a game between wife and husband.}
\label{fig:RS}
\end{figure}

The above game is a symmetric game. Row player has weak dominant action $NF$. Column player has weak dominant action $S$. This game has two Nash equilibria $(NF, S)$ and $(F, NS)$. Rostam's Dilemma is strategy maker of order $(2, 2)$. Row player has weak dominant strategy $NF$ and weak dominated strategy $F$. Column player has weak dominant strategy $S$ and weak dominated strategy $NS$. Also the game has four pairs of rational actions $(NF, S)_{1,2}$, $(F, NS)_{1,2}$, $(NF, NS)_{2}$ and $(F, S)_{1}$.

Recently Eshaghi and Askari introduced a new concept of rational choice called  \emph{hyper-rational choice}. In this concept, the actor thinks about profit or loss of other actors in addition to his personal profit or loss and then will choose an action which is desirable to him \cite{Askari}. This concept explains that, based on the loss of player 2, $F$ is a strictly dominant action for player 1, and  based on the loss of player 1, $NS$ is a strictly dominant action for player 2. If interaction between players is based on loss of other player, player 1 prefers: $(F, NS)_{1} \sim^{'} (F, S)_{1} \succeq^{'} (NF, NS)_{1} \succeq^{'} (NF, S)_{1}$, and player 2 prefers: $(F, NS)_{2} \sim^{'} (NF, NS)_{2} \succeq^{'} (F, S)_{2} \succeq^{'} (NF, S)_{2}$. Therefore, for both players two pairs of actions $(NF, S)$, $(F, NS)$ are hyper-rational.

Rostam's game put a dilemma on the way of players. Yet two players have weak dominant actions and can choose that and gain more payoff. In the other words, each player’s weak dominant action in some kind build a weak trust between players and this weak trust is fragile. But the game is dilemma because players can select weak dominated action and obtain least payoff. In game $g_9$ of Fig. \ref{fig:Cuba missile crisis}, both players based on system dynamic and weak trust to each other selected weak dominant strategy. In other words, based on collective benefit thinking, both player prefers: $({_{9}C},{_{9}C}) \succeq^{'} ({_{9}D},{_{9}D}) \succeq^{'} (({_{9}D},{_{9}C}) \succeq^{'} ({_{9}C},{_{9}D})$  or  $({_{9}C},{_{9}C}) \succeq^{'} ({_{9}D},{_{9}D}) \succeq^{'} (({_{9}C},{_{9}D}) \succeq^{'} ({_{9}D},{_{9}C})$. But in the game between Rostam and Sohrab because a game occur just once, weak trust was broken and both selected weak dominated action. In other words, based on collective loss thinking, both player prefers: $(F, NS) \succeq^{'} (NF, S) \succeq^{'} (F, S) \succeq^{'} (NF, NS)$ or $(F, NS) \succeq^{'} (NF, S) \succeq^{'} (NF, NS) \succeq^{'} (F, S)$. Weak trust between Rostam and Sohrab was broken and turned into distrust to each other and son (Sohrab) was killed by father(Rostam).

Now, as an application of the Rostam's Dilemma, social dilemma in the way of a wife and husband is modeled. Consider a couple who live together for several years, The couple are disagreeing with each other about a problem. This dispute lead to having trouble between husband and wife in some other problems. If both player condonation $C$ of these disputes and resolve all problems, then each earns reward $4$. If the husband select condonation of disputes, but the wife blame husband and intensify disputes $B$, then the wife gain payoff $3$ and the husband gets $1$ and vice versa. If the wife to blame her husband and the husband to blame her wife, more intensify disputes becomes, then both gain a payoff 1. The wife is considered as row player (player 1) and husband is considered as column (player 2)(Fig. \ref{fig:RS}$G^{'}$).

This dilemma occurs in most societies, wife and husband have weakly dominant strategy $C$, but sometimes wife and husband select weakly dominated strategy $B$, which lead to intensify disputes. The game Nash equilibria are $(C,C)$ and $(B,B)$. Eshaghi and Askari introduced a new concept that Called \emph{taxonomy} \cite{Askari}. Taxonomy of hyper-preference means that if we face an actor with two choices of hyper-preferences, she will necessarily have an opinion on which she likes more. Taxonomy of player’s hyper-preferences depends on environmental condition, the kind of behavior interactive, self-evaluation system and evaluation system of other interacting persons, helps to the wife and husband that based on collective profit prefer $(C, C)$ and based on collective loss prefer $(B, B)$.


\section{Conclusion}\label{S.5}

The dynamic system of games helps us to analyze an event by dividing it into different courses and dynamically using several games. This will enable the event analyst to evaluate the decisions and strategies that the players have chosen and achieved a reasonable and acceptable result. The system also shows the impact of players' decisions on each other and the impact of games on each other in any period.

In this study, we used the dynamic system of strategic games to investigate the interaction between Soviet Union and America after World War II until October 28, 1926, that is, the end of Cuban missile crisis. To this end, we divide this time interval into five periods and in each period, we reviewed static games with complete information that has occurred. Each country, based on its forces and capabilities, sought to achieve its goals and objectives in Cuba. Therefore, it can be concluded that in Cuban missile crisis, the two countries reached to compromise.  Dynamics system of games is a combination of dynamic and static interactive situations that are moving forward. In many cases, dynamic system of strategic games can provide a mechanism to move towards cooperation between players which helps to find a solution to exit conflicts. Moreover, we introduced Rostam’s Dilemma. Rostam’s Dilemma is a symmetric $2\times 2$ game that examines difficult conditions of decision making between hyper-rational players. In this game, a weak trust has been created between players, but it is fragile. The new attitude in this article toward $2\times 2$ games properties can result in a new and different characterization compared to the topology of $2\times 2$ games. The hyper-rational choice theory suggests that hyper-rational players have considered three classes of hyper-preferences that help determine how to behave in interactive decisions. The goal of a game analysis with two hyper-rationality players is to provide insight into real-world situations that are often more complex than a game with two rational players where the choices of strategy is only based on individual preferences.






\end{document}